\documentclass[twocolumn,aps,pra,showpacs,superscriptaddress,nofootinbib]{revtex4-1}

\usepackage[latin1]{inputenc}
\usepackage{graphicx,epstopdf}
\usepackage{amsmath}
\usepackage{hyperref}
\usepackage{amsfonts}
\usepackage{amsthm}
\usepackage{amssymb}
\usepackage{color}
\usepackage{latexsym}
\usepackage{times,txfonts}

\newtheorem{theorem}{Theorem}

\newcommand{\fig}[1]{{Fig.}}

\begin{document}

\title{Non-Markovianity through Multipartite Correlation Measures}

%%%%%%%%%%%%%%%%%%%%%%%%%%%%%%%%%%%%%%%%%%%%%%%
%%%%%%%%%%%%%%---AUTHORS---%%%%%%%%%%%%%%%%%%%%
%%%%%%%%%%%%%%%%%%%%%%%%%%%%%%%%%%%%%%%%%%%%%%%
\author{Fagner M. Paula}
\email{fagnerm@utfpr.edu.br }
\affiliation{Universidade Tecnol\'ogica Federal do Paran\'a - Rua Cristo rei 19, Vila Becker, 85902-490, Toledo, PR, Brazil}

\author{Paola C. Obando}
%\email{paolaconcha@if.uff.br}
\affiliation{Instituto de F\'isica, Universidade Federal Fluminense, Av. Gal. Milton Tavares de Souza s/n, Gragoat\'a, 24210-346, Niter\'oi, RJ, Brazil}

\author{Marcelo S. Sarandy}
%\email{msarandy@if.uff.br}
\affiliation{Instituto de F\'isica, Universidade Federal Fluminense, Av. Gal. Milton Tavares de Souza s/n, Gragoat\'a, 24210-346, Niter\'oi, RJ, Brazil}
\affiliation{Center for Quantum Information Science \& Technology and Ming Hsieh Department of Electrical Engineering,
University of Southern California, Los Angeles, California 90089, USA}

\date{\today}

%%%%%%%%%%%%%%%%%%%%%%%%%%%%%%%%%%%%%%%%%%%%%%%
%%%%%%%%%%%%%%---ABSTRACT---%%%%%%%%%%%%%%%%%%%
%%%%%%%%%%%%%%%%%%%%%%%%%%%%%%%%%%%%%%%%%%%%%%%
\begin{abstract}
We provide a characterization of memory effects in non-Markovian system-bath 
interactions from a quantum information perspective. More specifically, we establish sufficient  
conditions for which generalized measures of multipartite quantum, classical, and total correlations 
can be used to quantify the degree of non-Markovianity of a local quantum decohering process.  We 
illustrate our results by considering the dynamical behavior of the trace-distance correlations in 
multi-qubit systems under local dephasing and generalized amplitude damping.
\end{abstract}

\pacs{03.67.-a, 03.65.Yz, 03.65.Ud}

\maketitle

%%%%%%%%%%%%%%%%%%%%%%%%%%%%%%%
%%   ---INTRODUCTION---
%%%%%%%%%%%%%%%%%%%%%%%%%%%%%%%

\section{Introduction}

The characterization of memory effects in non-Markovian evolutions is a 
fundamental subject in the theory of open systems~\cite{Breuer:book}. Indeed, the Markovian 
behavior is always an idealization in the description of the quantum dynamics, with non-Markovianity 
being non-negligible in a number of different scenarios, such as 
biological~\cite{Ishizaki:09,Rebentrost:09,Liang:10,Chen:15} or condensed matter 
systems~\cite{Wolf:08,Apollaro:11,Haikka:12}. From an applied point of view, non-Markovian 
dynamics may be a resource for quantum tasks through an increase in the capacities
of quantum channels~\cite{Bylicka:13}. Moreover, it also exhibits applications in fault-tolerant 
quantum computation~\cite{Aharonov:06}. Rigorously, non-Markovianity can be defined 
through the deviation of a  dynamical evolution map from a {\it divisible} completely positive 
trace-preserving (CPTP) map~\cite{Breuer:09}. This behavior is manifested both in 
entanglement~\cite{Rivas:10} and in other correlation sources~\cite{Luo:12,Haikka:13,Fanchini:14,He:14,Dhar:15},   
providing an approach that takes advantage of quantum information tools in the open-systems realm. 
More specifically, non-Markovianity can be interpreted in this context as a flow of information back to the system 
due to its interaction with the environment, which may imply non-monotonic behavior of 
correlations as a function of time. 

Here, we propose a general framework to characterize non-Markovianity through multipartite measures of quantum, 
classical, and total correlations. Recently, there has been growing interest in the investigation of correlation measures 
from a quantum information perspective~\cite{Modi,Celeri,Sarandy:12}. In particular, measures for quantum correlations 
such as those provided by discordlike quantities and their corresponding classical counterparts have been formulated, 
with these quantities used as resources for implementing a variety of quantum tasks (see Ref.~\cite{Modi} and references therein). 
These measures were originally introduced in an entropic scenario by Ollivier and Zurek~\cite{Ollivier:01}. 
In a geometric context, correlations can be defined based on the relative entropy~\cite{Modi:10}, Hilbert-Schmidt 
norm~\cite{Dakic:10,Bellomo:12}, trace norm~\cite{Paula,Nakano:12}, and Bures norm~\cite{Spehner:13,Bromley}. 
All of these distinct versions of correlation measures can be described by a unified framework in terms of a generalized distance 
(or pseudo distance) function~\cite{Modi,Brodutch-Modi,EPL108}. Our approach for non-Markovianity
includes all these measures as particular cases and characterizes the non-Markovian behavior by taking into account 
the program introduced by Rivas, Huelga, and Plenio~\cite{Rivas:10}, in which an ancilla is coupled to a system 
which interacts with an environment. Specifically, we provide a rigorous description of the hypotheses over local dynamical 
 maps under which quantum, classical, and total correlations can be used as measures of non-Markovianity. 
We illustrate the results considering the dynamics of the trace-distance correlations in qubit systems under either 
local dephasing or generalized amplitude damping (GAD). 

%%%%%%%%%%%%%%%%%%%%%%%%%%%%%%%
%% ---CHARACTERIZING NON-MARKOVIANITY---
%%%%%%%%%%%%%%%%%%%%%%%%%%%%%%%

\section{Characterizing non-Markovianity}

Let us suppose a quantum process governed by a time-local master equation 
\begin{equation}
\displaystyle\frac{d \rho}{dt} = {\cal L}_t \, \rho(t), 
\label{master}
\end{equation}
where the time-dependent generator ${\cal L}_t$ is given by
\begin{eqnarray}
 {\cal L}_t \rho(t) &=& -i\left[H(t),\rho(t)\right]  + \sum_i \gamma_i (t) \left( A_i(t)\rho(t)A^\dagger_i(t) \right. \nonumber \\
 && \left. - \frac{1}{2} \left\{A^\dagger_i(t) A_i(t),\rho(t)\right\}\right) ,
 \end{eqnarray}
with $H(t)$ denoting the effective system Hamiltonian, $A_i(t)$ denoting the Lindblad operators, and $\gamma_i(t)$ denoting  
the relaxation rates. By taking the relaxation rates as positive functions, i.e., $\gamma_i(t) \ge 0$, the generator 
${\cal L}_t$ assumes the Lindblad form~\cite{Lindblad} for each fixed $t \ge 0$. The master equation describes 
the dynamics of the density operator through $\rho(t)=\Phi_{t,\,\tau}\rho(\tau)$, with the CPTP map  
$\Phi_{t,\,\tau}$ given by $\Phi_{t,\,\tau} = {\cal T} \exp \left(\int_{\tau}^{t} dt^\prime{\cal L}_{t^\prime} \right)$, with 
${\cal T} $ denoting the chronological time-ordering operator. The dynamical map $\Phi_{t,\,\tau}$ then satisfies the 
divisibility condition $\Phi_{t,\,\tau}=\Phi_{t,\,r}\Phi_{r,\,\tau}$ $(t \ge r \ge \tau \ge 0$), which characterizes the Markovianity of the 
quantum process.  On the other hand, for  $\gamma_i(t) < 0$, the corresponding dynamical 
map $\Phi_{t,\,\tau}$ may not be CPTP for intermediate time intervals and the divisibility property of the overall CPTP dynamics is violated, 
which characterizes a non-Markovian behavior~\cite{Breuer:09,Rivas:10,Dariusz:14}.

If a function $F=F(\rho)$ is monotonically nonincreasing under divisible maps acting on $\rho$, i.e., 
$F(\Phi_{t,\,\tau}\rho(\tau))\leq F(\rho(\tau))$ when $\Phi_{t,\,\tau}=\Phi_{t,\,r}\Phi_{r,\,\tau}$, 
then $F(\rho)$ is monotonically nonincreasing with increasing time, namely, $dF(t)/dt\leq 0$. However, this is not always 
true for a non-Markovian process, in which the divisibility of the map is violated, with $dF(t)/dt>0$ being a 
straightforward non-Markovianity witness~\cite{Breuer:09}. Therefore, $F(\rho)$ can be employed to point out the breakdown of 
Markovianity and the degree of non-Markovianity can be defined by
\begin{equation}
N_{F}(\Phi)=\text{max}_{\rho(0)}\int_{\frac{d}{dt}F(t)>0}{\frac{d}{dt}F(t)dt} ,
\label{NF}
\end{equation}
with the maximization performed over all sets of possible initial states $\rho \left(0\right)$ and the integration 
extended over all time intervals for which $dF(\rho)/dt>0$. Numerically, we can write
\begin{equation}
N_{F}(\Phi)=\textrm{max}_{\rho(0)}\sum_{i}\left[\left\{F(\tau_{i+1})-F(\tau_i)\right\}\right],
\label{NF2}
\end{equation}
where $\{(\tau_i,\tau_{i+1})\}$ represents the set of all time intervals for which $F(t+\Delta t)>F(t)$. The maximization 
over $\rho(0)$ is not a trivial task. Nevertheless, it is always possible to find out lower bounds to $N_F(\Phi)$ by optimizing 
over any class of initial states, which leads to a qualitative assessment of the non-Markovianity of the map $\Phi$~\cite{Dhar:15}.

%%%%%%%%%%%%%%%%%%%%%%%%%%%%%%%%%%%%%%%%
%% --- NON-MARKOVIANITY THROUGH CORRELATION MEASURES---
%%%%%%%%%%%%%%%%%%%%%%%%%%%%%%%%%%%%%%%%

\section{Non-Markovianity through Correlation measures}

In the general approach introduced in 
Refs.~\cite{Modi,Brodutch-Modi,EPL108}, discordlike measures of quantum, classical, and total correlations of an 
$n$-partite system in a state $\rho$ are defined by the respective expressions
\begin{eqnarray}
Q(\rho)&=&K\left[\rho, \, M^{-}\rho\right], \label{mQ}\\
C(\rho)&=&K\left[M^{+}\rho,\, M^{+}\pi_{\rho}\right], \label{mC} \\
T(\rho)&=&K\left[\rho,\, \pi_{\rho}\right],\label{mT}
\end{eqnarray}  
where $K\left[\rho,\sigma\right]$ denotes a real and positive function that vanishes for $\rho=\sigma$, 
the operator 
$\pi_{\rho} =\textrm{tr}_{\bar{1}}\rho\otimes \text{tr}_{\bar{2}}\rho...\otimes \text{tr}_{\bar{n}}\rho$ 
represents the product of the local marginals  of $\rho$, and $M^{-}\rho$ and $M^{+}\rho$ are {\it classical} 
states obtained through measurement maps $M^{-}$ and $M^{+}$ that minimize $Q$ and 
maximize $C$, respectively. In particular, they will be taken here as local $n$-partite maps $M^{\pm}=M^{\pm}_1\otimes M^{\pm}_2\cdots \otimes M^{\pm}_n$, where $M^{\pm}_i \ne \mathbb{I}$ or $M^{\pm}_i = \mathbb{I}$ depending on whether the \textit{i}th partition is measured or unmeasured. 
%%% PRA v2
Furthermore, we will define the measurements $\{M^{\pm}_i \ne \mathbb{I}\}$ as optimized complete sets of 
local {\it orthogonal projectors}. Since orthogonal projective measurements are adopted, we have $M^{\pm}_i M^{\pm}_i \rho = M^{\pm}_i \rho$.
%%%

The correlation measures in Eqs.~(\ref{mQ})-(\ref{mT}) are expected to obey the following set of fundamental 
criteria~\cite{Modi,Brodutch-Modi,EPL108}: (i) product states have no correlations, (ii) all correlations are invariant 
under local unitary operations, (iii) all correlations are non-negative, (iv) total correlations are nonincreasing under 
local quantum channels (CPTP maps), (v) classical states have no quantum correlations, and (vi) quantum correlations 
are nonincreasing under local quantum channels over unmeasured subsystems~\cite{EPL108}. 
In order to satisfy the requirements above, we restrict $K$ to be 
%%%PRA v2
positive and unitary invariant.  
%%%
Moreover, we also require $K$ to be {\it contractible} under CPTP maps $\Phi$~\cite{EPL108}, i.e., 
\begin{equation}
K[\Phi\rho, \Phi\sigma] \le K[\rho, \sigma] \,\,\,\,\,\, (\forall \rho,\sigma). 
\end{equation}
In particular, a variety of such functions $K$ can be adopted as, 
for instance, the trace distance $K (\rho, \sigma) = \mathrm{tr}|\rho-\sigma|$.

In this work, we will consider multipartite correlated quantum systems, such as that illustrated in Fig.~\ref{fig:1}. In particular, we 
will consider local dynamical maps $\Phi=\Phi_1\otimes\Phi_2 \cdots \otimes \Phi_n$. In this scenario, we can show that a 
non-monotonic behavior of the correlations $Q(\rho)$, $C(\rho)$, and $T(\rho)$ as a function of time may provide a direct
 measure of the degree of non-Markovianity $N_F(\Phi)$, with $F = Q, C,$ or $T$.  
This result is contained in Theorem~1 below. 

\begin{figure}[ht!]
\centering
\includegraphics[scale=0.33]{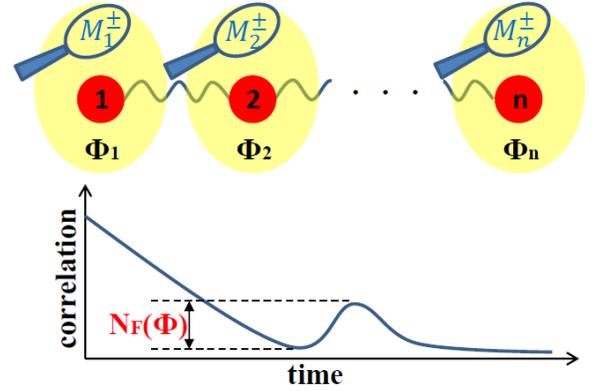}
\caption{Illustration of the degree of non-Markovianity $N_F(\Phi)$ via locally measured states.}
\label{fig:1}
\end{figure}

\begin{theorem}
Consider a quantum evolution driven by a local dynamical map $\Phi=\bigotimes_{i=1}^{n}\Phi_i$. 
Then, assuming that $K$ is contractible under CPTP maps, it follows that (i) $N_T(\Phi)$ is a measure 
of non-Markovianity and (ii) $N_Q(\Phi)$ and $N_C(\Phi)$ are measures of non-Markovianity for local 
measurements $M^{\pm}=\bigotimes_{i=1}^{n}M^{\pm}_i$ such that $M^{\pm}_i =\mathbb{I}$ when $\Phi_i \neq\mathbb{I}\,$.
\end{theorem}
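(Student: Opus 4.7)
The plan is to reduce Theorem~1 to a contractivity statement for each of $T$, $Q$, $C$ under local CPTP maps of the prescribed form. Once $F(\Phi\rho)\le F(\rho)$ is established, divisibility $\Phi_{t,\tau}=\Phi_{t,r}\Phi_{r,\tau}$, whose intermediate pieces inherit the local tensor structure, gives $F(\rho(t))\le F(\rho(r))$ along every Markovian trajectory. Hence $dF/dt>0$ can occur only when the divisibility condition is violated, so that the quantity $N_F(\Phi)$ of Eq.~(\ref{NF}) vanishes on the Markovian class and is non-negative in general: this is precisely the defining property of a measure of non-Markovianity.

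The first technical step is the factorization $\pi_{\Phi\rho}=\Phi\pi_\rho$, valid for $\Phi=\bigotimes_{i=1}^{n}\Phi_i$ because partial traces over complementary subsystems commute with local CPTP maps supported on those subsystems. For the total correlation this already suffices: contractibility of $K$ yields
\begin{equation}
T(\Phi\rho)=K[\Phi\rho,\Phi\pi_\rho]\le K[\rho,\pi_\rho]=T(\rho),
\end{equation}
with no restriction on the measurements, giving part~(i).

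For $Q$ and $C$ I would bring in the extra commutation $[\Phi,M^{\pm}]=0$, which is exactly what the hypothesis $M^{\pm}_i=\mathbb{I}$ whenever $\Phi_i\ne\mathbb{I}$ enforces: on each factor either the local channel or the local measurement is trivial, so the two tensor products commute factor by factor. Combined with the variational character of the optimal measurements, contractibility then gives
\begin{equation}
Q(\Phi\rho)\le K[\Phi\rho,M^{-}_{\rho}\Phi\rho]=K[\Phi\rho,\Phi M^{-}_{\rho}\rho]\le K[\rho,M^{-}_{\rho}\rho]=Q(\rho),
\end{equation}
with $M^{-}_{\rho}$ the minimizer for $\rho$, and
\begin{equation}
C(\Phi\rho)=K[\Phi M^{+}_{\Phi\rho}\rho,\Phi M^{+}_{\Phi\rho}\pi_\rho]\le K[M^{+}_{\Phi\rho}\rho,M^{+}_{\Phi\rho}\pi_\rho]\le C(\rho),
\end{equation}
where the final inequality uses that $M^{+}_{\Phi\rho}$ is a feasible candidate in the maximization that defines $C(\rho)$. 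This yields part~(ii).

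The hard part will be handling precisely this commutation: the optimal $M^{\pm}$ for $\rho$ and for $\Phi\rho$ generically differ, and applying contractibility of $K$ naively would require $M^{\pm}\Phi\rho=\Phi M^{\pm}\rho$. The projector hypothesis on the $M^{\pm}_i$ plays no role beyond ensuring that $M^{\pm}M^{\pm}\rho=M^{\pm}\rho$, so the real content of the theorem is the identification of the condition $M^{\pm}_i=\mathbb{I}$ when $\Phi_i\ne\mathbb{I}$ as the minimal sufficient structure that makes the discordlike quantum and classical measures monotone under the considered class of local dynamics.
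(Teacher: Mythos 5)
Your proposal is correct and follows essentially the same route as the paper's proof: the factorization $\pi_{\Phi\rho}=\Phi\pi_{\rho}$ plus contractibility handles $T$, and for $Q$ and $C$ you use the commutation $[\Phi,M^{\pm}]=0$ together with the same one-sided variational bounds (the minimizer of the earlier state as a suboptimal candidate for the later state for $Q$, and the maximizer of the later state as a suboptimal candidate for the earlier state for $C$). No gaps; the observation that the projector hypothesis is not needed beyond idempotence is consistent with the paper's argument.
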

\begin{proof}
We begin by using $\Phi=\bigotimes_{i=1}^{n}\Phi_i$. Then, by adopting $t \ge \tau \ge 0$, 
we write $\rho(t)=\Phi_{t,\tau}\rho(\tau)=\bigotimes_i\Phi_{i\,t,\tau\,}\rho(\tau)$.  Moreover, for local CPTP maps, we have the relation $\pi_{\Phi_{t,\tau}\rho(\tau)}=\Phi_{t,\tau}\pi_{\rho(\tau)}$. 
(i) Thus, let us consider $N_T(\Phi)$. A generalized total correlation measure can be written as $T(t) = K\left[\rho(t),\, \pi_{\rho(t)}\right]= K\left[\Phi_{t,\tau}\rho(\tau),\, \Phi_{t,\tau}\pi_{\rho(\tau)}\right]$. Imposing the condition that $K$ is contractible under CPTP maps, we have $K\left[\Phi_{t,\tau}\rho(\tau),\, \Phi_{t,\tau}\pi_{\rho(\tau)}\right] \leq K\left[\rho(\tau),\, \pi_{\rho(\tau)}\right]= T(\tau)$. Hence, $T(t)\leq T(\tau)$.
(ii) Let us now take $N_Q(\Phi)$ and $N_C(\Phi)$. A generalized quantum correlation measure can be written as $Q(t)= K\left[\rho(t),M_{t}^{-}\rho(t)\right]$, where we are considering $M_{t}^{-}=\bigotimes_i M^{-}_{i\,t}$ with $M^{-}_i =\mathbb{I}$ when $\Phi_i \neq\mathbb{I}$ (consequently, $M^{-}\Phi=\Phi M^{-}$). As $M_{\tau}^{-}$ does not necessarily minimize $Q(t)$, we can write $Q(t)\leq K\left[\rho(t),M_{\tau}^{-}\rho(t)\right]=K\left[\Phi_{t,\tau}\rho(\tau),\Phi_{t,\tau}M_{\tau}^{-}\rho(\tau)\right]$. By taking $\Phi_{i\, t,\tau}$ as a CPTP map and imposing the condition that $K$ is contractible under CPTP maps, we obtain $K\left[\Phi_{t,\tau}\rho(\tau),\Phi_{t,\tau}M_{\tau}^{-}\rho(\tau)\right]\leq K\left[\rho(\tau),M_{\tau}^{-}\rho(\tau)\right]=Q(\tau)$. Hence, $Q(t)\leq Q(\tau)$. Next, let us consider $N_C(\Phi)$. A generalized classical correlation measure can be written as $C(t)= K\left[M_{t}^{+}\rho(t),M_{t}^{+}\pi_{\rho(t)}\right]$, where we are considering $M_{t}^{+}=\bigotimes_i M^{+}_{i\,t}$ with $M^{+}_i =\mathbb{I}$ when $\Phi_i \neq\mathbb{I}$ (consequently, $M^{+}\Phi=\Phi M^{+}$). Then, imposing the condition that $K$ is contractible under CPTP maps, we can write $C(t)= K\left[\Phi_{t\, ,\tau}M_{t}^{+}\rho(\tau),\Phi_{t\, ,\tau}M_{t}^{+}\pi_{\rho(\tau)}\right]\leq K\left[M_{t}^{+}\rho(\tau), M_{t}^{+}\pi_{\rho(\tau)}\right]$. As $M_{t}^{+}$ does not necessarily maximize $C(\tau)$, then $K\left[M_{t}^{+}\rho(\tau), M_{t}^{+}\pi_{\rho(\tau)}\right]\leq K\left[M_{\tau}^{+}\rho(\tau),M_{\tau}^{+}\pi_{\rho(\tau)}\right]=C(\tau)$. Hence, $C(t)\leq C(\tau)$. 
\end{proof}
We observe that Theorem~1 ensures $N_Q(\Phi)$ and $N_C(\Phi)$ as measures of 
non-Markovianity by assuming that the subsystems under decoherence are unmeasured, i.e.,  
$M^{\pm}_i =\mathbb{I}$ when $\Phi_i \neq\mathbb{I}$. The measurements are then 
 performed over ancillary states that are effectively free of decoherence. 
This requirement is unnecessary for $N_T(\Phi)$ since it is a measurement-independent quantifier. 
%%% PRA v2
An effective isolation of an ancilla to probe non-Markovianity has been experimentally achieved in 
several scenarios (see, e.g., Refs.~\cite{Xu:13,Bernardes:15}). More generally, it can be 
approximately assumed when the relaxation times of the ancillary subsystem are much larger 
than those of the principal subsystem. Similarly, it also happens when the multi-local dynamical map can 
be written as an effective transformation where only part of the subsystems undergoes 
decoherence [see Eq. (\ref{phi-eff-IVA}) for a bipartite example in Sec. IV B].

%%%%%%%%%%%%%%%%%%%%%%%%%%%%%%%%%%%%%%
%% --- Applications
%%%%%%%%%%%%%%%%%%%%%%%%%%%%%%%%%%%%%%

\section{Applications}

%%%%%%%%%%%%%%%%%%%%%%%%%%%%%%%%%%%%%%
%% --- Trace-norm correlations for two-qubit X states
%%%%%%%%%%%%%%%%%%%%%%%%%%%%%%%%%%%%%%

\subsection{Trace-norm correlations for two-qubit \textit{X} states}

Let us consider correlations based on the 
Schatten 1-norm (trace-norm) and projective measurements operating over one qubit within a two-qubit 
system, i.e., $K\left[\rho,\sigma\right]=\mathrm{tr}|\rho-\sigma|$ and $M^{\pm}=M_1^{\pm}\otimes\mathbb{I}$. 
By adopting these conditions, $Q(\rho)=\mathrm{tr}\left|\rho-M_1^{-}\otimes\mathbb{I}\,\rho\right|$ 
is then the trace-norm geometric quantum discord, as introduced in Refs.~\cite{Paula,Nakano:12}, 
with $C(\rho)=\mathrm{tr}\left|M_1^{+}\otimes \mathbb{I}\,\left(\rho-\pi_{\rho}\right)\right|$ and 
$T(\rho)=\mathrm{tr}\left|\rho-\pi_{\rho}\right|$ being the corresponding classical and total correlations~\cite{EPL108}. 
The trace-norm geometric quantum and classical correlations have been experimentally discussed in 
Refs.~\cite{Silva:13,PRL:13}. We will consider initial density operators restricted to the \textit{X} states, i.e.,
\begin{equation}
\rho(0)=\frac{1}{4}\left[\mathbb{I}\otimes\mathbb{I}+\sum_{i=1}^{3}c_i\,\sigma_i\otimes\sigma_i + 
c_4\mathbb{I}\otimes\sigma_3+c_5\sigma_3\otimes\mathbb{I}\right],
\end{equation}
where $\{\sigma_i\}$ represent the Pauli matrices and $\{c_{i}=c_{i}(0)\}$ are the initial correlation parameters. 
These states represent, for example, the general form of reduced density operators of arbitrary quantum spin 
chains with $Z_2$ (parity) symmetry  (for a review see, e.g., Ref.~\cite{Sarandy:12}). The trace-norm correlation 
measures  have been analytically developed for a general \textit{X} state~\cite{Ciccarello:14,Paola:15}, reading
\begin{eqnarray}
Q&=&\sqrt{\frac{ac-bd}{a+c-b-d}},
\label{quantum} \nonumber \\
C&=&\text{max}\left\{|c_1|,|c_2|,|c_3-c_4c_5|\right\}, \nonumber \\
T&=&\max\left\{C,\tfrac{1}{2}\left(|c_1|+|c_2|+|c_3-c_4c_5|\right)\right\},
\label{total}
\end{eqnarray}
where $a=\text{max}\{c_{3}^{2},d+c_{5}^{2}\}$, $b=\text{min}\{c,c_{3}^{2}\}$, $c=\text{max}\{c_{1}^{2},c_{2}^{2}\}$, 
and $d=\text{min}\{c_{1}^{2},c_{2}^{2}\}$.

%%%%%%%%%%%%%%%%%%%%%%%%%%%%%%%%%%%%%%
%% ---APPLICATION 1--- Two-qubit X state under dephasing 
%%%%%%%%%%%%%%%%%%%%%%%%%%%%%%%%%%%%%%

\subsection{Two-qubit state under dephasing via a time-local master equation}

Let us start by focusing on a dynamical map of the form $\Phi=\Phi_{1}\otimes\Phi_{2}$, where
\begin{equation}
\Phi_j \rho_j = {\frac{1+f_j(t)}{2}} \rho_j \,\, + \,\, {\frac{1-f_j(t)}{2}}\sigma_3 \rho_j \sigma_3,
\label{local-dep}
\end{equation}
with $f_j(t)=\text{exp}\left[-2\int_{0}^{t}\gamma_j(\tau)d\tau\right]$. This represents a local dephasing channel acting 
over subsystem $j$, which can be derived from Eq.~(\ref{master}) by taking $H=-\left(\sigma_z^1+\sigma_z^2\right)/2$ and $A_j = \sigma_z^j$ ($j=1,2$). 
For simplicity, the time-dependent decoherence 
rates associated with each subsystem will 
be chosen to be the same, namely, $\gamma_1(t)=\gamma_2(t)$ such that $f_1(t)=f_2(t)$.
The map $\Phi_j$ preserves the \textit{X} state form of $\rho$, with  
$\{c_1(t),c_2(t),c_3(t),c_4(t),c_5(t)\}=\{c_1f_k(t),c_2f_k(t),c_3,c_4,c_5\}$. Moreover, we can write
\begin{equation}  \label{phi-eff-IVA}
\Phi=\Phi_{1}\otimes\Phi_{2}=\Phi^{eff}\otimes\mathbb{I}=\mathbb{I}\otimes\Phi^{eff},
\end{equation}
where $\Phi^{eff}$ is an effective dephasing channel with $f(t)=f_1(t)f_2(t)=\text{exp}\left[-2\int_{0}^{t}\gamma(\tau)d\tau\right]$, 
with $\gamma(\tau)=\gamma_1(\tau)+\gamma_2(\tau)=2\gamma_1(t)$, such that
$\{c_1(t),c_2(t),c_3(t),c_4(t),c_5(t)\}=\{c_1f(t),c_2f(t),c_3,c_4,c_5\}$. Therefore, $Q(t)$, $C(t)$, or $T(t)$ can be used to characterize non-Markovianity, as provided by Theorem~1.
In the Markovian regime, which takes place when $\gamma(t)\geq0$ for all $t\geq 0$, we have $df(t)/dt \leq 0$. 
Consequently, we have $dQ(t)/dt \leq 0$, $dC(t)/dt \leq 0$, 
and $dT(t)/dt \leq 0$. 

In order to quantify the non-Markovianity of the dynamical map $\Phi=\Phi_1\otimes\Phi_2$, we will first consider the 
classical correlation. Thus, we replace $F(t)$ by $C(t)$ in Eq.~(\ref{NF}).
The condition $dC(t)/dt>0$ occurs only if $C(t)>|c_3(0)-c_4(0)c_5(0)|$ and $\gamma(t)<0$. Under this condition, 
$dC(t)/dt=-2\text{max}\{|c_1(0)|,|c_2(0)|\}\gamma(t)f(t)$ and the maximization in Eq.~(\ref{NF}) is achieved when 
$c_3(0)-c_4(0)c_5(0)=0$ and $\text{max}\{|c_1(0)|,|c_2(0)|\}=1$. Thus, we find
\begin{equation}
N_{C}(\Phi)=-2\int_{\gamma(t)<0}{\gamma(t)f(t)dt} .
\label{NC}
\end{equation}
By taking $\rho(0)$ as a maximally entangled state, we obtain the same estimation of the non-Markovianity degree via total or quantum correlation. 
In fact, we have for the four Bell states $\left|c_1\right|=\left|c_2\right|=\left|c_3\right|=1$ and $c_4=c_5=0$ 
such that $T(t)=\max\{1,f(t)+\frac{1}{2}\}$ and $Q(t)=f(t)$. As $df(t)/dt=-2\gamma(t)f(t)$ and $f(t)>0$, we conclude that 
$dT(t)/dt>0$ or $dQ(t)/dt>0$ is equivalent to $\gamma(t)<0$. Under this condition, $f(t)>1$ and $dT(t)/dt=dQ(t)/dt=-2\gamma(t)f(t)$, 
which leads to
\begin{equation}
N_T(\Phi)=N_Q(\Phi)=N_{C}(\Phi).
\end{equation}
The degree of non-Markovianity found here via trace-distance correlation measures is then consistent with previous results for 
measures of non-Markovianity~\cite{Breuer:09, Dhar:15,Vacchini, Luo:12}.

%%%%%%%%%%%%%%%%%%%%%%%%%%%%%%%%%%%%%%%%%%%%%%%%%%%%%%%%%%
% ---APPLICATION 2--- Two-qubit X state under GAD 
%%%%%%%%%%%%%%%%%%%%%%%%%%%%%%%%%%%%%%%%%%%%%%%%%%%%%%%%%%

\subsection{Two-qubit state under GAD via the Lindblad rate equation}

Let us now illustrate the signature of non-Markovianity for two qubits under local GAD, which will be described in 
the open-system framework developed in Ref.~\cite{AAPRA74} based on a Lindblad rate equation. In this scenario, the density matrix 
 in Eq.(\ref{master}) is replaced by 
\begin{equation}
\rho(t) = \sum^{R_{max}}_{R=1}\rho_R(t), 
\end{equation}
where each auxiliary (unnormalized) operator $\rho_R$ 
defines the system dynamics given that the reservoir is in the $R$-configurational bath state, with $R_{max}$ being the number of 
configurational states of the environment. The probability that the environment is in a given state at time $t$ reads 
$P_R(t)=\mathrm{tr}[\rho_R(t)]$, and the set of states $\{\rho_R(t)\}$ encodes both the system dynamics and the fluctuations of the 
environment~\cite{AAPRA74,HPPRA75}. Then, we model the environment as 
being characterized by a two-dimensional configurational space ($R_{max}=2$), which only affects the decay rates of the system.  
Each state follows by itself a Lindblad rate equation
\begin{eqnarray}\label{Eq:rho1}
\nonumber \frac{d\rho_1(t)}{dt}&=& -i[H_1, \rho_1(t)] + \bar{\gamma}^{A}_1\mathcal {L}^{A}\rho_1(t) + \bar{\gamma}^{B}_1\mathcal {L}^{B}\rho_1(t) \\&-& \phi_{21}\rho_1(t)+\phi_{12}\rho_{2}(t), \\
\label{Eq:rho2}
\nonumber \frac{d\rho_2(t)}{dt} &=& -i[H_2, \rho_2(t)] + \bar{\gamma}^{A}_2 \mathcal {L}^{A}\rho_2(t) + \bar{\gamma}^{B}_2 \mathcal {L}^{B}\rho_2(t) \nonumber\\ &-& \phi_{21}\rho_2(t)+\phi_{12}\rho_{1}(t),
\end{eqnarray}
where the structure of the superoperator $\mathcal {L}$ for the  GAD channel is given by
\begin{equation}
\mathcal {L}^{A,B} \rho_R =\left(-\frac{\sigma^{\dagger A,B} \sigma^{A,B}\,\rho_R}{2}- \frac{\rho_R\,\sigma^{\dagger A,B} \sigma^{A,B}}{2}+\sigma^{\dagger A,B}\rho_R \sigma^{A,B}\right).
\end{equation}
 The first lines of Eqs.~(\ref{Eq:rho1}) and (\ref{Eq:rho2}) define  the unitary and dissipative dynamics for the two-qubit system, given that the bath is in configurational state 
1 and configurational state 2, respectively.  The constants $\{\bar{\gamma}^{A}_{1,2},\bar{\gamma}^{B}_{1,2}\}$ are the natural decay rates of the system associated with each reservoir state. The positivity of the density matrix will be
ensured as long as these decoherence coefficients obey $\bar{\gamma}^{A\,B}_{1,2}\geq 0$ \cite{AAPRA74,budini43}.
On the other hand, the second line of Eqs. (\ref{Eq:rho1}) and (\ref{Eq:rho2}) describes transitions between the configurational states  of the environment (with rates $\phi_{12}$ 
and $\phi_{21}$)~\cite{budini43}. For simplicity, the decay rates associated with each subsystem will be chosen to be the same, namely, $\bar{\gamma}^{A}_{1}= \bar{\gamma}^{B}_{1}\equiv \bar{\gamma}_{1}$ and $\bar{\gamma}^{A}_{2}= \bar{\gamma}^{B}_{2}\equiv \bar{\gamma}_{2}$. Moreover, we define the characteristic dimensionless parameters  
\begin{equation}
\epsilon = \frac{\bar{\gamma}_1}{\bar{\gamma}_1 + \bar{\gamma}_2},\quad 
\eta = \frac{\phi_{12}}{\phi_{12}+ \phi_{21}} \quad 
v = \frac{ \phi_{12}+ \phi_{21}}{\bar{\gamma}_1 + \bar{\gamma}_2} ,
\end{equation}
where $\epsilon, \eta \in [0,1]$ and $v \in [0,\infty)$.
We will analyze the system in the limit of either fast or slow environmental fluctuations. The fast limit of environmental fluctuations occurs when the reservoir fluctuations are much faster than the average decay rates of the system, namely, $\{ \phi_{R'R}\} \gg  \{ \bar{\gamma}_R \}$ ($v\gg 1$), which implies that the system exhibits Markovian behavior. On the other hand, when the bath fluctuations are much slower than the average decay rate, namely, $\{ \phi_{R'R}\} \ll  \{\bar{\gamma}_R \}$ ($v \ll 1$), the system is in the limit of slow environmental fluctuations. The signatures of non-Markovianity will be provided by the total correlation, which has the advantages of avoiding both extremization procedures and further requirements 
over the dynamical map. The non-Markovian behavior can then be witnessed in  
Fig.~\ref{fig:GADvel}, which shows the temporal evolution of the total correlation for several values of  $v$, where we have taken $\epsilon=0.92$, $\eta=0,5$, and an
initial \textit{X} state described by $c_1=0.20, c_2=-0.20, c_3=0.60, c_4=0.50$, and $c_5=0.70$. 
\begin{figure}[h]
 \centering
 \includegraphics[scale=0.39]{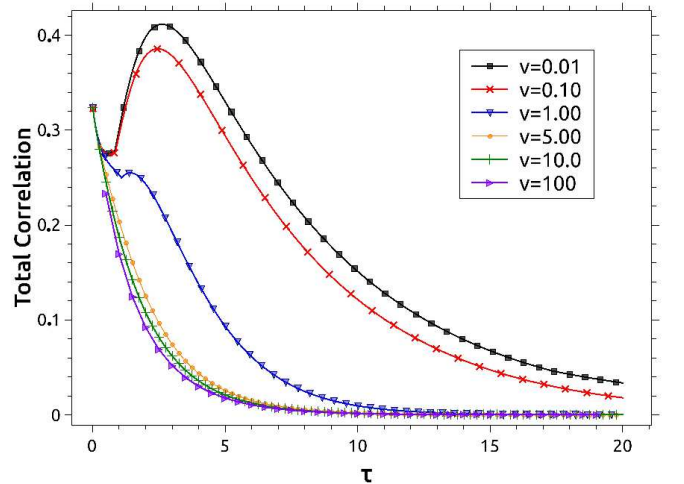} % PD markov grafic
 \caption{Total correlation as a function of  $\tau=(\bar{\gamma}_A+\bar{\gamma}_B)t$ for a two-qubit system under a non-Markovian GAD channel. The initial state is in the \textit{X} form, with $c_1=0.20$, $c_2=-0.20$, $c_3=0.60$, $c_4=0.50$, and $c_5=0.70$. We have also taken $\epsilon = 0.92$ and $\eta=0.5$. The non-monotonic behavior 
gets pronounced as we decrease $v$  from the Markovian regime ($v \gg 1$) towards the non-Markovian regime.}
\label{fig:GADvel}
\end{figure}

For the fast limit, the decay of the total correlation is a monotonically decreasing function, which corresponds to a Markovian evolution. Otherwise, when the system is subject to a non-Markovian evolution (slow limit), the total correlation shows a non-monotonic evolution, which gets more pronounced as we decrease $v$. The degree of non-Markovianity $N_T(v)$ can be rigorously obtained from Eq.(\ref{NF2}) by 
%%% PRA v2
a maximization 
%%%
over all initial states. On the other hand, a lower bound for $N_T(v)$  can be directly obtained from Fig.~\ref{fig:GADvel}  through the height of the non-monotonic sector as a function of $v$.

%%%%%%%%%%%%%%%%%%%%%%%%%%%%%%%%%%%%%%%%%%%%%%%%%%%%%%%%%%
% ---APPLICATION 3--- Non-Markovianity through Multipartite Correlations
%%%%%%%%%%%%%%%%%%%%%%%%%%%%%%%%%%%%%%%%%%%%%%%%%%%%%%%%%%

\subsection{Multipartite entangled state under local dephasing}

Consider an $n$-partite system initially in the Greenberger-Horne-Zeilinger (GHZ) state, 
i.e., a maximally entangled state of the form 
\begin{equation}
\rho=\frac{1}{2}\left(\left|0\right\rangle^{n}\left\langle 0\right|^{n}+\left|0\right\rangle^{n}\left\langle 1\right|^{n}+\left|1\right\rangle^{n}\left\langle 0\right|^{n}+\left|1\right\rangle^{n}\left\langle 1\right|^{n}\right),
\end{equation} 
where $\left|k\right\rangle^{n}=\left|k\right\rangle_{1}\otimes\left|k\right\rangle_{2}\otimes\cdots\otimes\left|k\right\rangle_{n}$ $\left(k=0,\,1\right)$. 
By applying a dynamical map 
$\Phi=\Phi_1\otimes\Phi_2\otimes\cdots\otimes\Phi_n$ 
over the GHZ state and choosing $\Phi_i$ as a local 
dephasing channel as in Eq.~(\ref{local-dep}), we get 
\begin{equation}
\rho(t)=\frac{1}{2}\left(\left|0\right\rangle^{n}\left\langle 0\right|^{n}+f(t)\left|0\right\rangle^{n}\left\langle 1\right|^{n}+f(t)\left|1\right\rangle^{n}\left\langle 0\right|^{n}+\left|1\right\rangle^{n}\left\langle 1\right|^{n}\right),
\end{equation} 
where $f(t)=\text{exp}\left[-2\int_{0}^{t}\gamma(\tau)d\tau\right]$, with $\gamma(t)=\sum_{i=1}^{n}{\gamma_i(t)}$ denoting the sum of 
the time-dependent decoherence rates. We will consider the multipartite total correlation as the non-Markovianity quantifier and use 
the GHZ state to provide a lower bound for  $N_{T}(\Phi)$. The product
of the local marginals of $\rho(t)$ is given by $\pi_{\rho(t)}=\mathbb{I}/2^{n}$, and the eigenvalues of the operator $\rho(t)-\pi_{\rho(t)}$ are
\begin{eqnarray} 
\lambda_i &=& -2^{-n}\,\,\,\,\, \left(1\leq i\leq 2^n-2\right), \nonumber \\
\lambda_{2^{n}-1} &=& \frac{1}{2}\left(1-2^{1-n}-f\right),  \nonumber \\ 
\lambda_{2^{n}} &=& \frac{1}{2}\left(1-2^{1-n}+f\right).
\end{eqnarray}
Therefore, $T(t)=\sum_i^{2^{n}}\left|\lambda_i\right|$ or
$T(t)=1-2^{1-n}+\max\{1-2^{1-n},f(t)\}.$ 
We have $dT/dt=0$ for $f(t)\leq 1-2^{1-n}$ and $dT/dt=-2\gamma(t)f(t)$ for $f(t) > 1-2^{1-n}$. Since $f(t)>0$, we conclude that $dT/dt>0$ is 
equivalent to the condition $\gamma(t)<0$. Moreover, $f(t)>1$ [consequently, $f(t)>1-2^{n-1}$] when $\gamma(t)<0$. Thus, we find a generalization of 
the result in Eq.(\ref{NC}):
\begin{equation}
N_{T}(\Phi)=\int_{\frac{d}{dt}T(t)>0}{\frac{d}{dt}T(t)dt}=-2\int_{\gamma(t)<0}{\gamma(t)f(t)dt}.
\end{equation}

%%%%%%%%%%%%%%%%%%%%%%%%%%%%%%%%%%%%%%%%
% --- CONCLUSIONS ---
%%%%%%%%%%%%%%%%%%%%%%%%%%%%%%%%%%%%%%%%

\section{Conclusion}

We have introduced a unified framework based on generalized quantum, classical, and total 
correlation measures to characterize the non-Markovianity of local dynamical maps over multipartite quantum systems. 
This approach establishes sufficient conditions under which each class of correlation can be used to determine the degree of non-Markovian 
behavior.  We illustrated our results for different master-equation methods and for different sources of decoherence. We expect 
applications in experimental setups for which correlations may be accessible to the observer. In addition, the vanishing of entanglement for 
high-temperature regimes~\cite{Werlang:10} or for distant neighbors within a composite system~\cite{Maziero:PLA} may also 
motivate the use of generalized correlations as a tool to characterize non-Markovianity. 
Further applications include the assessment of other 
approaches beyond Markovianity (see, e.g., Ref.~\cite{Shabani:05}) and of additional axioms over correlation functions (see, e.g., Refs.~\cite{Cianciaruso,Hu:12}). 
These topics are left for future research. 

%%%%%%%%%%%%%%%%%%%%%%%%%%%%%%%%%%%%%%%%
% --- ACKNOWLEDGMENTS ---
%%%%%%%%%%%%%%%%%%%%%%%%%%%%%%%%%%%%%%%%

\vspace{1.2cm}

\section*{Acknowledgments} 

M.S.S. thanks D. Lidar for his hospitality at the University of Southern California.  
This work is supported by the Brazilian agencies CNPq, CAPES, and FAPERJ and the Brazilian National Institute for 
Science and Technology of Quantum Information (INCT-IQ).

\end{document}